\newtheorem{theorem}{Theorem}
\newtheorem{definition}[theorem]{Definition}
\newtheorem{remark}[theorem]{Remark}
\newcommand{\C}{\mathbb{C}}
\newcommand{\cA}{\mathcal{A}}
\newcommand{\cC}{\mathcal{C}}
\newcommand{\cD}{\mathcal{D}}
\newcommand{\cG}{\mathcal{G}}
\newcommand{\cH}{\mathcal{H}}
\newcommand{\cL}{\mathcal{L}}
\newcommand{\cR}{\mathcal{R}}
\newcommand{\cS}{\mathcal{S}}
\newcommand{\cU}{\mathcal{U}}
\renewcommand{\>}{\rangle}
\newcommand{\<}{\langle}
\title{Controlization Schemes Based on Orthogonal Arrays}
\author{Anirban Chowdhury\quad Ewout van den Berg\quad Pawel
  Wocjan\\[1ex]IBM Quantum, IBM Thomas J. Watson Research Center,
  Yorktown Heights, NY, USA}
\begin{document}

\maketitle

\abstract{ Realizing controlled operations is fundamental to the
  design and execution of quantum algorithms. In quantum simulation
  and learning of quantum many-body systems, an important subroutine
  consists of implementing a controlled Hamiltonian
  time-evolution. Given only black-box access to the uncontrolled
  evolution $e^{-iHt}$, \emph{controlizing} it, i.e., implementing
  $\mathrm{ctrl}(e^{-iHt}) = |0\>\<0|\otimes I + |1\>\<1 |\otimes
  e^{-iHt}$ is non-trivial.  Controlization has been recently used in
  quantum algorithms for transforming unknown Hamiltonian
  dynamics~\cite{odake2024universal} leveraging a scheme introduced in
  Refs.~\cite{nakayama2015energy,dong2021controlled}.  The main idea
  behind the scheme is to intersperse the uncontrolled evolution with
  suitable operations such that the overall dynamics approximates the
  desired controlled evolution.  Although efficient, this scheme uses
  operations randomly sampled from an exponentially large set.  In the
  present work, we show that more efficient controlization schemes can
  be constructed with the help of orthogonal arrays for unknown
  2-local Hamiltonians. We conduct a detailed analysis of their
  performance and demonstrate the resulting improvements through
  numerical experiments. This construction can also be generalized to
  $k$-local Hamiltonians. Moreover, our controlization schemes based
  on orthogonal arrays can take advantage of the interaction graph's
  structure and be made more efficient.}

\section{Introduction}

The task of efficiently realizing controlled operations plays a
fundamental role in the design and execution of quantum
algorithms. Algorithmic primitives such as quantum phase estimation
\cite{nielsen00}, quantum amplitude estimation
\cite{brassard2002amplitude}, linear-combination-of-unitaries (LCU)
\cite{childs2012hamiltonian,kothari14}, and quantum singular value
transformation (QSVT) \cite{Low2019qubitization,gilyen2018singular}
require implementing a \emph {controlled} unitary $U$, an operation of
the form $\mathrm{ctrl}(U)=|0\>\<0|\otimes I + |1\>\<1|\otimes U$. In
applications related to quantum simulation, the unitary $U$ is often
the time-evolution under a Hamiltonian $H$ for time $t$, i.e.,
$U = e^{-iHt}$. Quantum phase estimation, for example, is used with
controlled time-evolution to determine spectral properties of
Hamiltonians. Block-encoding techniques are used to manipulate
controlled time-evolution into implementing functions of $H$ -- this
idea is foundational to quantum algorithms for preparing both ground
and Gibbs states of quantum many-body systems
\cite{poulin2008ground,poulin2009sampling,chowdhury2017gibbs,dong2022ground,silva2023fragmented,ding2024robust},
solving linear systems of equations
\cite{harrow2009linear,childs2017linear}, and many more
\cite{Low2019qubitization,gilyen2018singular}.

In most of these examples, the Hamiltonian $H$ is known
beforehand. Thus there is an explicit quantum circuit that realizes
the time-evolution and turning that into a controlled operation is
straightforward.  There exist problems, for instance, in quantum
meteorology \cite{pang2014metrology} and quantum learning
\cite{wiebe2014hamiltonianlearning,li2020hamiltoniantomo,anshu2020sampleefficient},
where the Hamiltonian may not be known a priori. However, we may still
wish to use the quantum algorithmic primitives mentioned above to
design efficient algorithms. This raises the question: ``Can we
\emph{controlize} an unknown Hamiltonian evolution? That is, given the
ability to perform $\exp{(-iHt)}$ as a black-box, can we implement the
operation $\mathrm{ctrl}\left(\exp{(-iHt)}\right)$?"

Controlization of unknown quantum dynamics has been studied in a
number of different settings
\cite{janzing2002quantum,nakayama2015energy,dong2021controlled,odake2024universal}. Janzing
\cite{janzing2002quantum} showed how a controlization protocol can be
used to perform phase estimation of unknown quantum Hamiltonians.
Closely related are also the problems of reversing and fast-forwarding
Hamiltonian dynamics
\cite{navascues2018reversing,trillo2020translating}. While our focus
here is on Hamiltonian dynamics, it should be noted that controlizing
arbitrary unitary operations has also been investigated extensively
\cite{araujo2014quantum,chiribella2016optimal,quintino2019reversing,dong2021controlled}. Controlization
has also been used as a primitive in algorithms for transforming
unknown Hamiltonian dynamics to implement functions of Hamiltonians
\cite{odake2024universal}.

The controlization protocol for unknown Hamiltonian dynamics in
Ref.~\cite{odake2024universal}, which is based on prior work in
Refs.~\cite{nakayama2015energy,dong2021controlled}, proceeds as
follows. It invokes the black-box to implement the Hamiltonian
dynamics $\exp(-i H \tau)\in \cU (\cH)$ for different times $\tau>0$
and intersperses them with different control operations from a finite
subset of $\cU(\C^2\otimes \cH)$ such that the resulting evolution
approximates the unitary
\begin{align}
    \mathrm{ctrl}\big(\exp(-i H t)\big) 
    &=
    |0\>\<0| \otimes I + |1\>\<1| \otimes \exp(-i H t) \in \cU(\C^2 \otimes \cH)
\end{align}
for some desired value of $t$.  More precisely, the unknown
Hamiltonian $H$ acts on $n$ qubits, that is, $\cH=(\C^2)^{\otimes n}$
and the control operations are controlled $n$-fold tensor products of
the Pauli matrices $\{I,X,Y,Z\}$. In particular, the operations are
randomly sampled from an exponentially large set.

The purpose of this work is to present efficient controlization
schemes for the situation where the quantum system consists of $n$
qudits, that is, $\mathcal{H}=(\C^d)^{\otimes n}$, and the unknown
system Hamiltonian $H\in\mathcal{L}(\mathcal{H})$ is assumed to be
$k$-local with $k$ being a constant. Our construction of
controlization schemes is based on the combinatorial concept of
orthogonal arrays \cite{hedayat1999orthogonal} and improves upon some
aspects of the controlization schemes presented in
\cite{odake2024universal}. It significantly reduces the size of the
set of operations used to intersperse the black-box Hamiltonian
dynamics.  The main idea underlying our construction is that there is
a close connection between controlization and decoupling schemes and
that the latter can be constructed from orthogonal arrays. We analyze
the performance of our OA-based schemes and show improved results over
existing methods for $k$-local Hamiltonians both theoretically and
through numerical experiments.

\section{Controlization of known Hamiltonian dynamics}

Before considering the case of unknown Hamiltonian dynamics, let us
illustrate how a decoupling scheme for a Hamiltonian $H$ for which the
terms are explicitly known can be naturally extended to a
controlization scheme.

For simplicity assume initially that $H = P$, where $P$ is a Pauli
operator. Choosing a Pauli $Q$ that anti-commutes with $P$ enables us
to decouple, i.e., to switch off the time evolution, since
\[
P + QPQ^{\dag} = P - P = 0
\]
and thus
\[
(e^{-iPt/2})(Qe^{-iPt/2}Q^{\dag}) =
(e^{-iPt/2})(e^{-iQPQ^{\dag}t/2}) =
(e^{-iPt/2})(e^{+iPt/2}) = I.
\]
Omitting $Q$ and $Q^{\dag}$ in the product above yields
\[
(e^{-iPt/2})(e^{-iPt/2}) =
e^{-iPt}
\]
This allows us to implement the conditional time evolution with $P$ as shown in the circuit below:

\begin{center}
\includegraphics[height=24pt]{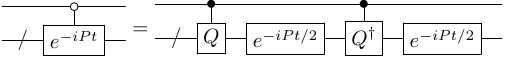}
\end{center}
Since we can always choose $Q$ to be a weight-one Pauli, the advantage
of this approach is that we replace a conditional multi-qubit Pauli
exponentiation gate by one that merely requires two conditional
single-qubit Pauli operations.

Time evolution for a known Hamiltonian $H = \sum_{i=1}^N \alpha_i H_i$
consisting of a weighted sum of Pauli operators can be implemented by
means of Trotterization or schemes such as qDRIFT
\cite{campbell2019random,kiss2023importancesampling}.  The latter is a
probabilistic simulation technique that often outperforms
deterministic simulation techniques. An important benefit of qDRIFT is
that it is independent of the number $N$ of Trotter terms but instead
depends on the absolute sum of Hamiltonian strengths $\alpha_i$.

A conditional time evolution is then obtained by splitting the time
evolution of each individual term in two pieces and conjugate one of
them by an operator that negates the term. Various circuit
simplifications can be made. For instance, if the Pauli terms $A$ and
$B$ for successive exponentiation terms $e^{iAt}$ and $e^{iBt}$, have
overlapping support we can always find a unit-weight Pauli $Q$ on that
support that anti-commutes with both terms. This allows us to write:

\begin{center}
\includegraphics[height=24pt]{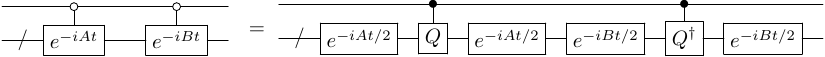}
\end{center}

The controlled $Q$ and $Q^{\dag}$ gates originally present between
$e^{-iAt/2}$ and $e^{-iBt/2}$ cancel and are therefore
omitted. Grouping the Pauli terms into commuting sets and
appropriately choosing the ordering of the terms allows additional
simplification along the same lines.

\section{Controlization of unknown Hamiltonian dynamics based on
  decoupling schemes}

The goal of dynamical decoupling is to effectively stop the
system\footnote{More generally, one can consider a subsystem.} from
evolving by interspersing the time evolution according to the system
Hamiltonian $H$ by suitable control operations. The works
\cite{lorenza99dynamical, zanardi00symmetrizing} were the first to
apply dynamical decoupling for purposes of quantum information
processing. We refer the reader to
\cite{dodd02universal,nielsen02universal,wocjan2002simulating,
  wocjan2002universal,wocjan2003computational} for an introduction to
a combinatorial approach for constructing Hamiltonian simulation
schemes based on average Hamiltonian theory. Decoupling and
time-reversal are two special cases of these schemes. We make use of
the combinatorial decoupling schemes presented in these works to
construct our controlization schemes.

In the current section we do not assume any particular structure of
the quantum system $\cH$ and its Hamiltonian $H\in\cL(\cH)$. We only
assume the Hamiltonian $H$ to be traceless throughout. We also assume
that the control operations in the control group $\cG$ -- a finite
subgroup of $\cU(\cH)$ -- can be implemented instantaneously (this is
often referred to as bang-bang control).\footnote{We refer the reader
  to the paper \cite{bookatz2014hamiltonian, bookatz2016improved} and
  the references therein for Hamiltonian simulation based on
  bounded-strength controls.}  In the next section we consider quantum
systems consisting of qudits and local Hamiltonians describing the
couplings among the qudits.

To illustrate the idea of Hamiltonian simulation based on average
Hamiltonian theory, let us consider a simple example that uses three
control operations $V_1$, $V_2$, $V_3\in\cG$ and invokes the black-box
to implement the dynamics for two times $\tau_1$ and $\tau_2$.  We can
rewrite the resulting unitary as follows:
\begin{align}
    &
    V_3 \, e^{-i H \tau_2} \, V_2 \, e^{-iH \tau_1} \, V_1 \\
    &=
    (V_3 V_ 2 V_1) \, (V_2 V_1)^\dagger \, e^{-iH \tau_2} \, (V_2 V_1) \, V_1^\dagger \, e^{-iH \tau_1} \, V_1 \\
    &=
    e^{-i U_2^\dagger H U_2 \tau_2} \, e^{-i U_1^\dagger H U_1 \tau_1}
\end{align}
The operators $U_1$ and $U_2$ are given by
\begin{align}
    U_1 &= V_1 \\
    U_2 &= V_2 V_1
\end{align}
assuming that the final control operation $V_3$ is chosen such that
$V_3 V_2 V_1=I$ (the latter means that the control sequence is
cyclic). This method of rewriting the resulting time evolution
generalizes to an arbitrary number of control operations.  For cyclic
control schemes we have
\begin{align}
    U_1 &= V_1 \\
    U_2 &= V_2 V_1 \\
        &\,\,\, \vdots \nonumber \\ 
    U_N &= V_N \cdots V_2 V_1 ,
\end{align}
where $V_{N+1} = (V_N \cdots V_2 V_1)^\dagger$.
Using the simple Trotter approximation, we obtain
\begin{equation}
    \prod_{j=1}^N e^{-i U_j^\dagger H U_j \tau_j} 
    =
    e^{-i U_N^\dagger H U_N \tau_N} \, \cdots \, 
    e^{-i U_2^\dagger H U_2 \tau_2} \, 
    e^{-i U_1^\dagger H U_1 \tau_1} \\
    \approx
    e^{-i \tilde{H}},
\end{equation}
where the average Hamiltonian $\tilde{H}$ is given by the sum
\begin{align}\label{eq:avg_Ham}
    \tilde{H} &= \tau_1 U_1 H U_1^\dagger + \tau_2 U_2 H U_2^\dagger + \ldots + \tau_N U_N H U_N^\dagger.
\end{align}

Thus, after the control cycle the resulting time evolution is
approximately as if the system had evolved under the average
Hamiltonian $\tilde{H}$.  This is referred to as average Hamiltonian
theory.

A better approximation can be achieved according to the formula:
\begin{align}
    \prod_{j=N}^1 e^{-i U_j^\dagger H U_j \tau_j/2} 
    \prod_{j=1}^N e^{-i U_j^\dagger H U_j \tau_j/2} 
\end{align}
Take note that the two products above multiply the exponentials
$e^{-i U_j^\dagger H u_j \tau_j/2}$ in opposite orders and use times
$\tau_j/2$.  This formula is a special case of the Lie-Trotter-Suzuki
formulas \cite{SUZ1990a, SUZ1991a}. Observe that these higher-order
formulas do not increase the overall time of the control cycle, but
only increase the number of control operations.

The average Hamiltonian in eq.~(\ref{eq:avg_Ham}) serves as the
starting point for these higher-order formulas. This is why it is
important to express the desired target Hamiltonian $\tilde{H}$ as a
weighted sum containing a small number $N$ of conjugates
$U_1^\dagger H U_1,\ldots,U_N^\dagger H U_N$ and having a small
overall time $\sum_{j=1}^N \tau_j$.

\begin{definition}[Simulation, decoupling, and time reversal schemes]
  Let $H,\tilde{H}\in\cL(\cH)$ be two Hamiltonians. We say that a
  scheme $\cS=(\tau_1,U_1;\ldots;\tau_N,U_N)$ is a simulation scheme
  for the system Hamiltonian $H$ and the target Hamiltonian
  $\tilde{H}$ if
\begin{align}
    \cS(H) 
    &= \sum_{j=1}^N \tau_j U_j H U_j^\dagger = \tilde{H}.
\end{align}
There are two important special cases. For $\tilde{H}=0_\cH$, where
$0_\cH\in\cL(\cH)$ denotes the zero operator, we call the scheme a
\emph{decoupling} scheme whenever $\sum_i \tau_i = 1$ and use $\cD$ to
denote it. For $\tilde{H}=-H$, we call the scheme a \emph{time
  reversal} scheme and use $\cR$ to denote it.
\end{definition}

\begin{remark}[Decoupling $\Rightarrow$ time reversal] 
  Before discussing in detail how decoupling schemes can be extended
  to controlization schemes, let us briefly describe the connection to
  time reversal schemes.  Let
  $\cD=(\tau_1, U_1; \tau_2, U_2; \ldots; \tau, U_N)$ be a decoupling
  scheme. We may assume, without loss of generality, that
  $U_1=I_\cH$. Then, we have
\begin{align}
    \tau_1 H + \tau_2 U_2 H U_2^\dagger + \ldots \tau_N U_N H U_N^\dagger 
    &= 0,
\end{align}
which is equivalent to
\begin{align}
    \frac{\tau_2}{\tau_1} \, U_2 H U_2^\dagger + \ldots + \frac{\tau_N}{\tau_1} \ U_N H U_N^\dagger 
    &= -H.
\end{align}
Thus, the decoupling scheme
$\cD=(\tau_1, U_1 = I; \tau_2, U_2; \ldots; \tau_N, U_N)$ gives rise
to the time reversal scheme
$\cR=(\tau_2 / \tau_1, U_2; \ldots; \tau_N / \tau_N, U_N)$.  Observe
that the time needed to realize time reversal is $(1-\tau_1)/\tau_1$
which will be greater than $1=\tau_1 + \ldots + \tau_N$ in contrast to
the situation for decoupling. This slow-down for time-reversal is
discussed in \cite{janzing2002complexity} and more generally for
Hamiltonian simulation in \cite{wocjan2002simulating}.
\end{remark}

Before we state the theorem connecting controlization and decoupling,
we need to introduce the following definition.  Let $M\in\cL(\cH)$ be
an arbitrary operator. Then, $\Lambda(M)$ is the controlled operator
defined to be
\begin{align}
    \Lambda(M) &= |0\>\<0| \otimes M + |1\>\<1| \otimes I_\cH \in \cL(\C^2 \otimes \cH),
\end{align}
where $I_\cH\in\cL(\cH)$ denotes the identity operator.

\begin{theorem}[Decoupling $\Rightarrow$ controlization]\label{thm:decoupling_ctrl}
  Let $\cD=(\tau_1,U_1;\ldots;\tau_N,U_N)$ be a decoupling scheme for
  some Hamiltonian $H\in\cL(\cH)$. Then, the scheme
  $\cC=\Lambda(\cD)=(\tau_1,\Lambda(U_1);\ldots;\tau_N,\Lambda(U_N))$
  is a simulation scheme for the pair
\begin{align}
    I_2 \otimes H \mbox{ and } |1\>\<1| \otimes H
\end{align}
of Hamiltonians in $\cL(\C^2\otimes \cH)$.  For this reason, $\cC$ is
a controlization scheme that makes it possible to approximately
implement the controlled time evolution according to $H$, that is,
\begin{align}\label{eq:ctrl_ham}
    |0\>\<0| \otimes I_\cH + |1\>\<1| \otimes \exp(-i H t).
\end{align}
\end{theorem}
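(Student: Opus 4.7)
My plan is to directly compute the average Hamiltonian produced by the scheme $\cC=\Lambda(\cD)$ when applied to the enlarged system, and then observe that exponentiating it gives the controlled evolution on the right-hand side of~(\ref{eq:ctrl_ham}). The key conceptual point is that the black box, when regarded on $\C^2\otimes\cH$, realizes the evolution $e^{-i(I_2\otimes H)\tau}$ for any $\tau>0$, so the ``system Hamiltonian'' that must be passed through the definition of a simulation scheme is $I_2\otimes H$, not $H$ itself.

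First I would expand a single conjugated term. Using $\Lambda(U_j)=|0\>\<0|\otimes U_j+|1\>\<1|\otimes I_\cH$ and orthogonality of $|0\>\<0|$ and $|1\>\<1|$, a direct block computation gives
\begin{align}
\Lambda(U_j)\,(I_2\otimes H)\,\Lambda(U_j)^\dagger
&= |0\>\<0|\otimes U_j H U_j^\dagger \;+\; |1\>\<1|\otimes H.
\end{align}
Summing with weights $\tau_j$ yields
\begin{align}
\cC(I_2\otimes H)
&= |0\>\<0|\otimes \Big(\sum_{j=1}^N \tau_j\, U_j H U_j^\dagger\Big)
\;+\;|1\>\<1|\otimes \Big(\sum_{j=1}^N \tau_j\Big)\, H.
\end{align}

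Next, I would invoke the two defining properties of the decoupling scheme $\cD$: the identity $\sum_j \tau_j U_j H U_j^\dagger = 0$ kills the $|0\>\<0|$ block, while the normalization $\sum_j \tau_j = 1$ leaves the $|1\>\<1|$ block equal to $H$. Hence $\cC(I_2\otimes H)=|1\>\<1|\otimes H$, which is exactly the simulation condition claimed in the theorem for the pair $(I_2\otimes H,\,|1\>\<1|\otimes H)$.

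Finally, I would tie this back to~(\ref{eq:ctrl_ham}) by noting that $|1\>\<1|\otimes H$ is block-diagonal with a zero block on $|0\>$ and $H$ on $|1\>$, so
\begin{align}
\exp\!\bigl(-i(|1\>\<1|\otimes H)\,t\bigr)
&= |0\>\<0|\otimes I_\cH \;+\; |1\>\<1|\otimes \exp(-iHt),
\end{align}
which is precisely $\mathrm{ctrl}(\exp(-iHt))$. The Trotter/Suzuki discussion preceding the theorem then guarantees that interspersing the black-box evolutions $e^{-i(I_2\otimes H)\tau_j}$ with the control operations $\Lambda(U_j)$ realizes this target unitary to arbitrary accuracy, confirming that $\cC$ is a valid controlization scheme. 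There is no real obstacle here beyond being careful with the block structure: the whole argument is an algebraic two-line identity once one recognizes that $\Lambda(U_j)$ acts nontrivially on the $|0\>$ branch only, so the $|1\>$ branch simply accumulates the full evolution while the $|0\>$ branch is decoupled to the identity.
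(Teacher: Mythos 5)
Your proposal is correct and follows essentially the same route as the paper's proof: conjugate $I_2\otimes H$ by $\Lambda(U_j)$ to split into the $|0\>\<0|$ and $|1\>\<1|$ blocks, kill the first block with the decoupling identity, and use $\sum_j\tau_j=1$ to leave $|1\>\<1|\otimes H$, whose exponential is the controlled evolution. No differences worth noting.
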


\begin{proof}
  We adjoin a single qubit $\C^2$ to the quantum system $\cH$. The
  overall Hamiltonian of the joint system $\C^2\otimes\cH$ is of the
  form
\begin{align}
    I_2 \otimes H,
\end{align}
that is, there is no coupling between $\C^2$ and $\cH$ and the control
qubit $\C^2$ does not have any non-trivial internal dynamics. We have
\begin{align}
    \Lambda(U_j) &= |0\>\<0| \otimes U_j + |1\>\<1| \otimes I_\cH
\end{align}
as the control operations in $\cC$ for $k=1,\ldots,N$. We obtain
\begin{align}
    \cC(I_2\otimes H) 
    &=
    \sum_{j=1}^N \tau_j \Lambda(U_j) (I_2 \otimes H) \Lambda(U_j)^\dagger \\
    &=
    |0\>\<0| \otimes \sum_{j=1}^N \tau_j U_j H U_j^\dagger +
    |1\>\<1| \otimes \sum_{j=1}^N \tau_j H \\
    &=
    |0\>\<0| \otimes \cD(H) + |1\>\<1| \otimes H \\
    &=
    |1\>\<1| \otimes H,
\end{align}
where we used the fact that $\cD(H)=0_\cH$ since $\cD$ is a decoupling
scheme for $H$ and $\sum_{j=1}^N\tau_j=1$ by convention for decoupling
schemes.  Finally, observe that time evolution according to the target
Hamiltonian $|1\>\<1|\otimes H$ gives rise to the desired controlled
time evolution in eq.~(\ref{eq:ctrl_ham}).
\end{proof}

\begin{remark}
  The idea of converting a decoupling scheme into a controlization
  scheme is sketched but not pursued in \cite{janzing2002quantum}
  because a different model is considered in that work. Let us briefly
  explain the main differences. Observe that in our theorem the
  operations $U_j$ of the decoupling scheme become controlled
  operations of the form
  $\Lambda(U_j)=|0\>\<0|\otimes U_j + |1\>\<1| \otimes I$ of the
  controlization scheme. Such controlled operations are specifically
  disallowed in that work. For instance, when $\cH=(\C^2)^{\otimes n}$
  and the control unitaries $U_j$ are tensor products of single qubit
  gates, the implementation of $\Lambda(U_j)$ would require two-qubit
  gates.  While bang-bang control of single qubit operations is often
  a valid assumption, bang-bang control of two-qubit operations is
  more problematic.

  Therefore, that work considers the following model.\footnote{Both to
    simplify the presentation and to facilitate the comparison with
    our model, we describe here the model of \cite{janzing2002quantum}
    as having a \emph{single} qubit as control, whereas that work
    considers more generally a \emph{multi-qubit} register.} The joint
  quantum system is $\C^2\otimes \cH$, where $\C^2$ is the ancilla
  qubit and $\cH$ is an $n$-qubit system. The former is called the
  control qubit and the latter the target register. The starting
  Hamiltonian $H_{\mathrm{init}}$ is a pair-interaction Hamiltonian of
  the form
\begin{align}
    H_{\mathrm{init}} &= H_{\C^2} \otimes I_\cH + H_{\C^2,\cH} + I_2 \otimes H.
\end{align}
The term $H_{\C^2}$ is the Hamiltonian of the control qubit and the term $H_{\C^2,\cH}$ specifies the pair-interactions\footnote{The pair-interactions between the control qubit and the qubits of the target register need to be non-trivial to enable one to construct a simulation scheme based on average Hamiltonian theory.} between the control qubit and the qubits of the target register. Both terms are assumed to be known. The term $H$ is the Hamiltonian of the target register.  It is an unknown pair-interaction Hamiltonian, and the task is to controlize it. More precisely, that work constructs a simulation scheme $\cC$ using only single qubit operations such that
\begin{align}
    \cS(H_{\mathrm{init}}) &= \sigma_z \otimes H.
\end{align}
Note that having $\sigma_z\otimes H$ is equivalent to having $|1\>\<1|\otimes H$.
\end{remark}

\section{Decoupling schemes based on orthogonal arrays}

The previous section established the connection between controlization
and decoupling schemes without making any assumptions about the
quantum system $\cH$ and the Hamiltonian $H\in\cL(H)$.

In this section we consider $k$-local Hamiltonians acting on $n$
qudits, that is, $\cH=(\C^d)^{\otimes n}$. We review some known
constructions of decoupling schemes based on the combinatorial concept
of orthogonal arrays, which then give rise to controlization schemes
via Theorem~\ref{thm:decoupling_ctrl}. Orthogonal arrays, which were
first used for decoupling two-local qubit Hamiltonians
in~\cite{stollsteimer2001suppression}, are defined as
follows~\cite{hedayat1999orthogonal}:

\begin{definition}[Orthogonal arrays]
  An $N\times n$ array $M$ with entries from a finite alphabet $\cA$
  is an $OA(N,n,s,t)$ orthogonal array with $s=|\cA|$ levels and
  strength $t$ if and only if each $N\times t$ subarray contains each
  $t$-tuple of elements of $\cA$ as a row exactly $\lambda=N/s^t$
  times.
\end{definition}

The challenge is to construct an orthogonal array with the smallest
possible $N$ for given parameters $n$, $s$, and $t$. The parameters of
the orthogonal array are related to parameters of the decoupling
schemes as follows: (i) $N$ is the number of time steps, (ii) $n$ is
the number of qudits, (iii) $s=d^2$, where $d$ is the dimension of the
qudits, and (iv) $t=k$, where $k$ is the locality of the qudit
Hamiltonian.

We consider a quantum system $\cH=(\C^d)^{\otimes n}$ that consists of
$n$ interacting $d$-dimensional qudits and assume that the system
Hamiltonian $H\in\cL(\cH)$ is an arbitrary $k$-local operator, that
is,
\begin{align}
    H &= \sum_{(i_1,i_2,\ldots,i_k)} H_{(i_1,i_2,\ldots,i_k)},
\end{align}
where the $k$-tuples $(i_1,i_2,\ldots,i_k)$ in the sum run over all
tuples with $1\le i_1 < i_2 < \ldots < i_k\le n$ and the corresponding
terms $H_{(i_1,i_2,\ldots,i_k)}$ are (traceless) Hermitian operators
acting only the $k$ qudits specified by the entries of
$(i_1,i_2,\ldots,i_k)$.

Let us consider a simple example of a $3$-local Hamiltonian $H$ acting
on $4$ qubits, that is, $d=2$, $n=4$, and $k=3$. Let $X$, $Y$, and $Z$
denote the Pauli matrices. The example Hamiltonian
$H\in\cL\big((\C^2)^{\otimes 4}\big)$ is
\begin{align}
    H 
    &= 
    X \otimes X \otimes I \otimes X + 
    Y \otimes I \otimes I \otimes Y +
    \frac{1}{2} \big(
    Z \otimes I \otimes Z \otimes Z +
    X \otimes I \otimes X \otimes X
    \big) 
\end{align}
and can be expressed as 
\begin{align}
    H 
    &=
    H_{(1,2,4)} + H_{(1,3,4)},
\end{align}
where the terms $H_{(1,2,4)}$ and $H_{(1,3,4)}$ are given by
\begin{align}
    H_{(1,2,4)} 
    &= 
    X \otimes X \otimes I \otimes X + 
    Y \otimes I \otimes I \otimes Y \\
    H_{(1,3,4)} 
    &=
    \frac{1}{2} \big(
    Z \otimes I \otimes Z \otimes Z +
    X \otimes I \otimes X \otimes X
    \big). 
\end{align}
Note that the $2$-local term $Y \otimes I \otimes I\otimes Y$ could be
included in $H_{(1,3,4)}$ instead of $H_{(1,2,4)}$.

\begin{theorem}[OA $\Rightarrow$ decoupling scheme]
  Let $\cA$ be the finite alphabet $\{1,\ldots,d^2\}$. Then, any
  orthogonal array with parameters $OA(N,n,d^2,k)$ over $\cA$ can be
  used to define a decoupling scheme $\cD$ that annihilates any
  $k$-local Hamiltonian on $n$ qudits. The number of local operations
  used in this scheme is given by $N$.
\end{theorem}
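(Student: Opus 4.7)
The plan is to make precise the intuition that each row of the orthogonal array names a tensor product of single-qudit unitaries drawn from a $1$-design on $\C^d$, and that the strength-$k$ condition forces every $k$-body term of $H$ to experience, on its own support, an independent uniform average over that $1$-design.

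First I would fix a nice error basis $\{P_a\}_{a\in\cA}$ on $\C^d$ indexed by $\cA=\{1,\ldots,d^2\}$, taking for concreteness the generalized clock--shift Pauli matrices. These unitaries both form a basis of $\cL(\C^d)$ and a unitary $1$-design:
\[
    \frac{1}{d^2}\sum_{a\in\cA} P_a\, M\, P_a^\dagger \;=\; \frac{\mathrm{tr}(M)}{d}\, I
    \quad\text{for every } M\in\cL(\C^d),
\]
and equivalently conjugation-averaging annihilates every traceless operator. Given the array with rows $(a_1^{(r)},\ldots,a_n^{(r)})$ for $r=1,\ldots,N$, define
\[
    U_r \;=\; P_{a_1^{(r)}}\otimes\cdots\otimes P_{a_n^{(r)}}, \qquad \tau_r \;=\; 1/N,
\]
and $\cD=(\tau_1,U_1;\ldots;\tau_N,U_N)$. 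This uses $N$ tensor-product control operations, each consisting of $n$ single-qudit gates, and automatically $\sum_r \tau_r = 1$ as required of a decoupling scheme.

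To verify $\cD(H)=0$, I would expand $H$ in the global product basis $\{P_{b_1}\otimes\cdots\otimes P_{b_n}\}$. The $k$-locality hypothesis ensures only basis elements with support of size at most $k$ appear, and tracelessness of $H$ kills the coefficient of the all-identity element. By linearity it suffices to prove $\cD(T)=0$ for an elementary term $T=P_{b_1}\otimes\cdots\otimes P_{b_n}$ with support $\{i_1<\cdots<i_{k'}\}$, $1\le k'\le k$, and each $P_{b_{i_j}}$ non-identity. Since conjugation is tensor-factor-wise, the gates acting on sites outside the support cancel, and since strength $k$ implies strength $k'$, the restriction of the $N$ rows to the columns $i_1,\ldots,i_{k'}$ realizes every tuple in $\cA^{k'}$ exactly $\lambda'=N/d^{2k'}$ times. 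Therefore
\[
    \cD(T) \;=\; \frac{\lambda'}{N}\sum_{(a_1,\ldots,a_{k'})\in\cA^{k'}}\bigotimes_{j=1}^{k'} P_{a_j} P_{b_{i_j}} P_{a_j}^\dagger
    \;=\; \bigotimes_{j=1}^{k'}\left(\frac{1}{d^2}\sum_{a\in\cA} P_a P_{b_{i_j}} P_a^\dagger\right),
\]
and by the $1$-design identity each factor equals $\mathrm{tr}(P_{b_{i_j}})/d\cdot I = 0$, since each $P_{b_{i_j}}$ is a non-identity generalized Pauli and hence traceless.

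The main step to get right is the elementary reduction in the last paragraph: one has to argue that $k$-locality together with tracelessness of $H$ produces a linear combination of elementary tensor-product basis terms, each having non-trivial, traceless action on a support of size between $1$ and $k$. Once that reduction is in place, the OA strength condition converts the row average into a tensor product of independent single-qudit $1$-design averages, each of which vanishes, so $\cD(H)=0$, and the scheme uses exactly $N$ control operations as claimed.
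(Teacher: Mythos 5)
Your proof is correct and follows essentially the same route as the paper's argument: rows of the OA define tensor products of generalized Paulis with equal weights $1/N$, the strength-$k$ condition turns the row average on each support into a full average over $\cA^{k'}$, and the nice-error-basis (depolarizing/$1$-design) property annihilates every traceless local term. Your version merely makes explicit the reduction to elementary Pauli basis terms and the fact that strength $k$ implies strength $k'\le k$, details the paper delegates to the cited references.
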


The proof for $2$-local ($k=2$) $n$ qudit Hamiltonians is given in
\cite[Theorem 1]{wocjan2002simulating}. The general case of qudit
Hamiltonians with locality $k\ge 2$ is established in \cite[Theorem
8]{roetteler2006equivalence}. We give a brief summary of the proof
below and refer the reader to the references for more details.

The unitaries $U_j$ for $j=1,\ldots,N$ in the OA-based decoupling
schemes are local operations of the form $U_j\in\cG^{\otimes n}$,
where $\cG$ is a finite subgroup of the unitary group $\cU(d)$.  The
times $\tau_j$ are all equal to $\frac{1}{N}$.  We now explain the
correspondence between the entries of orthogonal array and the control
operations.  Define the $d^2$ many generalized Pauli matrices
\begin{align}
    X^a Z^b \in \cG
\end{align}
for $a,b\in\{0,\ldots,d-1\}$, where
\begin{align}
    X &= \sum_{x=0}^{d-1} |x + 1 \mbox{ mod } d\>\<x| \\
    Z &= \mathrm{diag}(\omega^0,\omega^1\ldots,\omega^{d-1}),
\end{align}
and $\omega=\exp(2\pi i/d)$ is a $d$th root of unity.  Denote these
generalized Pauli matrices by $P_1,\ldots,P_{d^2}$. This collection is
a special case of a so-called nice error basis for $\C^d$
\cite{knill1996nonbinary, klappenecker2002beyond}.  For the special
case $d=2$, this yields the Pauli matrices.

It can be shown that all $k$-fold tensor products of generalized Pauli
matrices give rise to the depolarizing channel $\Phi$ on
$(\C^d)^{\otimes k}$. More precisely, the depolarizing channel is
given by
\begin{align}
    \Phi(\bullet) 
    &= 
    \frac{1}{|\cA^k|} \sum_{(i_1,i_2,\ldots,i_k)\in\cA^k}
    (P_{i_1} \otimes P_{i_2} \otimes \cdots \otimes P_{i_k}) 
    \bullet 
    (P_{i_1} \otimes P_{i_2} \otimes \cdots \otimes P_{i_k})^\dagger,
\end{align}
where $\cA^k$ denotes the $k$-fold direct product of
$\cA=\{1,2,\ldots,d^2\}$. In particular, any traceless Hamiltonian
$h\in\cL\big((\C^d)^{\otimes k})$ is annihilated by $\Phi$, that is,
$\Phi(h)=0$.

This annihilation property can be ``lifted'' to $k$-local $n$-qudits
Hamiltonians with help of the orthogonal array as follows. The $j$th
control operation $U_j$ is the $n$-fold tensor product of generalized
Pauli matrices defined to be
\begin{align}
    U_j 
    &= 
    P_{m_{j1}} \otimes P_{m_{j2}} \otimes \cdots \otimes P_{m_{jn}}\in\cG^{\otimes n},
\end{align}
where the entries $(m_{j1}, m_{j2},\ldots,m_{jn})\in\cA^n$ correspond
to the $j$th row of the orthogonal array.  It can be shown that the
scheme simultaneously defines a depolarizing channel on all
$k$-subsets of the $n$ qudits. This is due to the defining property of
orthogonal arrays: each $k$-tuple in $\cA^k$ appears exactly $\lambda$
times for any $k$-subset of columns of the orthogonal array.

\begin{figure}
\begin{center}
\begin{tabular}{|cccccccccccccccc|} 
\hline
1& 1& 1& 1& 2& 2& 2& 2& 3& 3& 3& 3& 4& 4& 4& 4 \\
1& 2& 3& 4& 1& 2& 3& 4& 1& 2& 3& 4& 1& 2& 3& 4 \\ 
1& 2& 3& 4& 4& 3& 2& 1& 2& 1& 4& 3& 3& 4& 1& 2 \\ 
1& 2& 3& 4& 2& 1& 4& 3& 3& 4& 1& 2& 4& 3& 2& 1 \\
1& 2& 3& 4& 3& 4& 1& 2& 4& 3& 2& 1& 2& 1& 4& 3 \\ \hline
\end{tabular}
\end{center}
\caption{Orthogonal array $OA(16, 5)$ with multiplicity $1$. The table shows the transpose of the OA.}
\label{tab:small_OA}
\end{figure}

The construction of a decoupling scheme with the above approach
requires an orthogonal array of appropriate size. For qubits, we
consider the alphabet $\cA=\{1, 2, 3, 4\}$ corresponding to the Pauli
matrices $\{I, X, Y, Z\}$, which gives $s=\vert\cA\vert =
4$. Figure~\ref{tab:small_OA} shows an example orthogonal array, which
can be used to decouple two-local ($k=2$) Hamiltonian on up to $n=5$
qubits with $N=16$ control operations (columns of the orthogonal array
can always be removed whenever $n < 5$).

Explicit constructions of orthogonal arrays are given
in~\cite{hedayat1999orthogonal}. The following two constructions yield
orthogonal arrays of strength $2$, which can be used to decouple
two-local Hamiltonians:

\begin{theorem}[{\cite[Theorem 3.20]{hedayat1999orthogonal}}]\label{thm:3.20}
  If $s$ is a prime power then an
  $OA(s^{\ell}, (s^{\ell}-1)/(s-1),s,2)$ exists whenever $\ell\geq 2$.
\end{theorem}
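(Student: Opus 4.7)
The plan is to give an explicit algebraic construction using the finite field $\mathbb{F}_s$, whose existence is guaranteed because $s$ is a prime power. The key numerical observation is that
\[
\frac{s^\ell - 1}{s - 1} = 1 + s + s^2 + \cdots + s^{\ell - 1}
\]
is exactly the number of one-dimensional subspaces of the vector space $\mathbb{F}_s^\ell$ (equivalently, the number of points of the projective space of dimension $\ell - 1$ over $\mathbb{F}_s$), since the $s^\ell - 1$ nonzero vectors of $\mathbb{F}_s^\ell$ partition into such subspaces of size $s - 1$ each.

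For the construction, identify the alphabet $\cA$ with $\mathbb{F}_s$ and index the $N = s^\ell$ rows of the array by the vectors $v \in \mathbb{F}_s^\ell$. For each of the $n = (s^\ell - 1)/(s-1)$ one-dimensional subspaces, choose an arbitrary nonzero representative $w$, and let the columns be indexed by these chosen representatives. The entry in row $v$ and column $w$ is then defined to be the inner product $v \cdot w \in \mathbb{F}_s$ with respect to the standard bilinear form on $\mathbb{F}_s^\ell$. (This is essentially the generator matrix construction for the dual of the $s$-ary Hamming code, but it can be verified directly.)

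To check strength $t = 2$, pick any two columns with representatives $w_1$ and $w_2$. Since they come from different one-dimensional subspaces, $w_2$ is not a scalar multiple of $w_1$, so $\{w_1, w_2\}$ is linearly independent over $\mathbb{F}_s$. Hence the linear map $\phi : \mathbb{F}_s^\ell \to \mathbb{F}_s^2$, $v \mapsto (v \cdot w_1,\, v \cdot w_2)$, is surjective with kernel of dimension $\ell - 2$, so every ordered pair in $\mathbb{F}_s^2$ is attained by exactly $s^{\ell-2}$ rows. This matches the required index $\lambda = N/s^2 = s^{\ell-2}$, establishing that the array is an $OA(s^\ell, (s^\ell-1)/(s-1), s, 2)$. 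The only conceptual subtlety---and the crux of the argument---is realizing that picking one representative per one-dimensional subspace automatically yields pairwise linearly independent column vectors, which is precisely what makes every two-column projection surjective; once that is in hand the remainder is rank--nullity.
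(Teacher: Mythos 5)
Your construction is correct and is precisely the standard Rao--Hamming (dual Hamming code / projective point) construction that underlies Theorem 3.20 of the cited reference; the paper itself offers no proof, only the citation. The key step --- that representatives of distinct one-dimensional subspaces are linearly independent, so each two-column projection is a surjective linear map with fibers of size $s^{\ell-2}=N/s^2$ --- is exactly the right argument.
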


\begin{theorem}[{\cite[Theorem 6.40]{hedayat1999orthogonal}}]\label{thm:6.40}
  If $s$ is a power of a prime and $\ell\geq 2$, then an orthogonal
  array $OA(2s^{\ell},2(s^{\ell}-1)/(s-1)-1,s,2)$ can be obtained by
  using difference schemes.
\end{theorem}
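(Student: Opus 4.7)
My plan is to give the classical Bose--Bush-type construction that combines a difference scheme with the Rao--Hamming OA from Theorem~\ref{thm:3.20}. A \emph{difference scheme} $D(r,c,s)$ over an abelian group $G$ of order $s$ is an $r\times c$ array with entries in $G$ such that, for any two distinct columns, the multiset of entrywise differences contains each element of $G$ exactly $r/s$ times. For a prime power $s$, the multiplication table of $\mathrm{GF}(s)$ furnishes a $D(s,s,s)$, and the bilinear form $(a,b)\mapsto\langle a,b\rangle$ on $\mathrm{GF}(s)^m$ extends this to $D(s^m,s^m,s)$; suitable variants give difference schemes of other sizes.

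The key lemma I would prove is the following product construction. Given an $OA(N,k,s,2)$ $A=(A_{j\beta})$ and a $D(r,c,s)$ $D=(D_{i\alpha})$, both over $\mathrm{GF}(s)$, the array $B$ with rows indexed by pairs $(i,j)$, columns indexed by pairs $(\alpha,\beta)$, and entries $B_{(i,j),(\alpha,\beta)}=D_{i\alpha}+A_{j\beta}$ is an $OA(rN,ck,s,2)$. I would verify strength $2$ by a case split on whether the two chosen columns share the same $\beta$-index: when they do not, each row-wise additive shift by $D$ is a bijection on $G^2$, so strength $2$ of $A$ transfers immediately to $B$; when they do, the difference-scheme property of $D$ forces the differences of the two columns across rows of $B$ to be equidistributed over $G$, which combined with the strength-1 property of $A$ yields equidistribution of all pairs in $G^2$.

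Having established this lemma, I would apply it with the Rao--Hamming $OA(s^{\ell-1},(s^{\ell-1}-1)/(s-1),s,2)$ from Theorem~\ref{thm:3.20} as $A$ and an appropriate difference scheme over $\mathrm{GF}(s)$ so that the product has $rN=2s^\ell$ rows. The main obstacle will be matching the column count exactly to $2(s^\ell-1)/(s-1)-1$: a pure product with, say, $D(2s,2s,s)$ yields only $2(s^\ell-s)/(s-1)$ columns, missing the target by exactly one. I expect this to be handled by appending a single auxiliary column to $B$ that is constant on each of the natural row-blocks induced by $D$; the delicate step is verifying that this extra column is pairwise balanced against every existing column of $B$, which can be checked using the strength-1 property of $A$ together with careful bookkeeping of the shifts induced by $D$. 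Once that step is carried out, the parameters coincide with those in the statement and the construction is complete.
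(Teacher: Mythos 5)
The paper does not prove this statement at all: it is imported verbatim as \cite[Theorem 6.40]{hedayat1999orthogonal}, so there is no internal proof to compare against, and your proposal should be judged as a reconstruction of the textbook argument. Your outline is indeed the standard route: the Kronecker-sum composition $B_{(i,j),(\alpha,\beta)}=D_{i\alpha}+A_{j\beta}$ of a difference scheme with an OA of strength $2$, and your two-case verification of strength $2$ (using strength $2$ of $A$ when $\beta_1\neq\beta_2$, and the difference property of $D$ plus column-balance of $A$ when $\beta_1=\beta_2$) is correct. The step you flag as delicate is in fact routine and does go through: take the extra column to be constant in $j$ and equal to $c_i\in\mathrm{GF}(s)$, where $c:\{1,\dots,2s\}\to\mathrm{GF}(s)$ is any two-to-one map. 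For any product column $(\alpha,\beta)$ and any target pair $(u,v)$, the count is $\sum_{i:c_i=u}\#\{j:A_{j\beta}=v-D_{i\alpha}\}=s^{\ell-2}\cdot\#\{i:c_i=u\}=2s^{\ell-2}$, which is exactly the index $\lambda=2s^{\ell}/s^2$ of the target array; only strength $1$ of $A$ is used. Your column count is also right: $2s\cdot\frac{s^{\ell-1}-1}{s-1}+1=2\frac{s^{\ell}-1}{s-1}-1$.

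The genuine gap is the existence of the difference scheme itself. Your parameter matching forces $r=c=2s$, i.e.\ you need a $D(2s,2s,s)$ (equivalently a generalized Hadamard matrix $GH(s,2)$) for \emph{every} prime power $s$, and this is precisely where the prime-power hypothesis and the phrase ``by using difference schemes'' in the statement do their work. The multiplication table of $\mathrm{GF}(s)$ only gives $D(s,s,s)$, and ``suitable variants'' does not produce the factor of $2$: constructing $D(2s,2s,s)$ requires a separate, nontrivial argument (an Addelman--Kempthorne-type construction using the quadratic character of $\mathrm{GF}(s)$ for odd $s$, and a different construction for $s=2^m$, which is the case $s=d^2=4$ actually used in this paper). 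Without establishing that ingredient, the construction never starts. A second, minor point: your $A$ is the Rao--Hamming array $OA(s^{\ell-1},(s^{\ell-1}-1)/(s-1),s,2)$, which Theorem~\ref{thm:3.20} supplies only for $\ell-1\geq 2$; the base case $\ell=2$ must be handled separately, e.g.\ by developing $D(2s,2s,s)$ over $\mathrm{GF}(s)$ directly to get $OA(2s^2,2s+1,s,2)$.
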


\noindent Applying these theorems in our setting gives orthogonal
arrays of the size shown in Table~\ref{table:OAs}. Note that most, if
not all, constructions in~\cite{hedayat1999orthogonal} require
$s=d^2$, and therefore $d$ itself, to be a prime power.

Tables of select orthogonal arrays of strength $2$ and higher are
available; see for instance~\cite{beth99design, colbourn2006handbook,
  hedayat1999orthogonal}. There is a close relationship between error
correcting codes and orthogonal arrays \cite[Theorem
4.6]{hedayat1999orthogonal}. Certain BCH-codes were used in
\cite[Theorem 6]{bookatz2016improved} to construct orthogonal arrays
of strength $t\ge 2$, which can be used to decouple $k$-local
$n$-qudits Hamiltonians.

\begin{theorem}\label{thm:bch_oa}
  For any $t\ge 2$, $n\ge (t-1)^2$, and $s=d^2$ with $d\ge 2$ a prime
  power, there exists an $OA(N,n,s,t)$ whose length $N$ scales as
  $N=O(n^{t-1})$. That is, there exists an OA decoupling scheme to
  decouple $k$-local Hamiltonians on $n$ qudits that uses
  $N=O(n^{k-1})$ time steps.
\end{theorem}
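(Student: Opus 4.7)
The plan is to construct the orthogonal array from a suitable linear code over $\mathbb{F}_s$ via the classical correspondence \cite[Theorem 4.6]{hedayat1999orthogonal}: if $C\subseteq\mathbb{F}_s^n$ is a linear code whose dual $C^\perp$ has minimum distance at least $t+1$, then listing the codewords of $C$ as the rows of an array yields an $OA(|C|,n,s,t)$. Since $s=d^2$ is a prime power, $\mathbb{F}_s$ and standard algebraic codes over it are available, so the task reduces to exhibiting a linear code over $\mathbb{F}_s$ of length $n$, dual distance at least $t+1$, and size $|C|=O(n^{t-1})$.

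First, I would choose $m=\lceil\log_s(n+1)\rceil$ and take a primitive BCH code $B\subseteq\mathbb{F}_s^{s^m-1}$ whose defining set contains a string of $t$ consecutive integers, so that the BCH bound gives $d(B)\ge t+1$. Shortening $B$ in the last $s^m-1-n$ coordinates produces $\tilde B\subseteq\mathbb{F}_s^n$ with $d(\tilde B)\ge t+1$, and setting $C=\tilde B^\perp$ gives $d(C^\perp)\ge t+1$. The BCH dimension formula asserts $\dim B\ge s^m-1-mr$, where $r$ is the number of distinct $s$-cyclotomic cosets modulo $s^m-1$ needed to cover the defining set; shortening preserves this inequality, so $\dim C\le mr$ and $|C|\le s^{mr}$.

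The main obstacle is the sharp combinatorial count establishing $mr\le(t-1)\log_s n+O(1)$: the naive bound $r\le t$ only yields $|C|=O(n^t)$, which is one factor of $n$ short of the claim. Closing the gap requires exploiting the Frobenius merging $i\sim si\bmod(s^m-1)$ on cyclotomic cosets to save at least one coset among the $t$ defining elements, possibly after shifting the defining set away from the narrow-sense choice $\{1,\dots,t\}$. The hypothesis $n\ge(t-1)^2$ is precisely the threshold forcing $s^m\gtrsim(t-1)^2$ and hence $m$ large enough that such a merger can be arranged; the explicit construction and coset count are carried out in~\cite[Theorem 6]{bookatz2016improved}. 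Once $\dim C\le(t-1)m+O(1)$ is in hand, $N=|C|\le s^{(t-1)m+O(1)}=O(n^{t-1})$ follows, and the decoupling scheme for $k$-local $n$-qudit Hamiltonians with $N=O(n^{k-1})$ time steps is obtained by taking $t=k$ and invoking the preceding \emph{OA $\Rightarrow$ decoupling scheme} theorem.
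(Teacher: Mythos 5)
The paper itself gives no proof of this theorem---it is imported verbatim from \cite[Theorem~6]{bookatz2016improved}---and your sketch follows exactly the route underlying that citation: linear codes with dual distance at least $t+1$ yield strength-$t$ OAs, and (shortened) BCH codes over $\mathbb{F}_s$ supply such codes with redundancy roughly $(t-1)m$. Your outline is sound, and the step you flag as the ``main obstacle'' has a cleaner resolution than hunting for Frobenius mergers: take the BCH defining set to be $\{0,1,\dots,t-1\}$ instead of the narrow-sense $\{1,\dots,t\}$. The $s$-cyclotomic coset of $0$ modulo $s^m-1$ is the singleton $\{0\}$, so the complete defining set is covered by one singleton coset plus at most $t-1$ cosets of size at most $m$, giving redundancy at most $1+(t-1)m$ and hence $N\le s^{1+(t-1)m}\le s^{t}(n+1)^{t-1}=O(n^{t-1})$ for fixed $s$ and $t$, while the BCH bound still certifies minimum distance at least $t+1$ because the defining set contains $t$ consecutive residues. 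With that substitution the hypothesis $n\ge(t-1)^2$ is not needed to arrange a coset merger (your guess about its role); it only serves to keep the parameters nondegenerate, and the final reduction to decoupling via the OA~$\Rightarrow$~decoupling theorem is exactly as you state.
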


In physical systems, the locality $k$ is generally a small fixed
number. The above statement focuses on the asymptotic cost for fixed
locality $k$ as the number $n$ of qudits grows. It should be noted
that, depending on the particular choice of parameters further
improvements are possible.

\begin{table}
\centering
\begin{tabular}{|l|l|l|}
\hline
$\gamma=1$ & $OA(16,5,4,2)$ & Follows from Theorem~\ref{thm:3.20}\\
$\gamma=2$ &$OA(32,9,4,2)$ & Follows from Theorem~\ref{thm:6.40}\\
$\gamma=4$ &$OA(64,21,4,2)$ & Follows from Theorem~\ref{thm:3.20}\\
$\gamma=8$ &$OA(128,41,4,2)$ & Follows from Theorem~\ref{thm:6.40}\\
$\gamma=16$ &$OA(256,85,4,2$ & Follows from Theorem~\ref{thm:3.20}\\
$\gamma=32$ &$OA(512,169,4,2)$ & Follows from Theorem~\ref{thm:6.40}\\
\hline
\end{tabular}
\caption{Example orthogonal array parameters for used in Pauli ($s=4$)
  decoupling schemes for two-local Hamiltonians ($k=2$). For instance,
  based on the second entry, an arbitrary pair-interaction Hamiltonian
  on $n=9$ qubits can be decoupled using $N=32$ control operations,
  where each control operation is tensor product of Pauli
  operators.}\label{table:OAs}
\end{table}

\subsection{Two-local Hamiltonians based on coupling topology}

We now consider the situation of two-local $n$-qudit Hamiltonians when
the qudits are not all coupled to each other, that is, the interaction
graph of the Hamiltonian is assumed to be a known non-complete graph.
For instance, this is the case when the qudits are arranged on a
square lattice and there is a coupling only between nearest neighbors
(only one site away either horizontally or vertically).

We color the vertices of this graph by assigning each vertex one of a
number of different colors. We need the coloring to be proper meaning
that connected vertices always receive different colors.  The
chromatic number is $\chi$ is the minimum number of colors required to
achieve a proper coloring. The chromatic number $\chi$ of a partially
coupled qudit Hamiltonian can be significantly smaller than $n$, which
is the chromatic number of the complete graph. For the square lattice
example, the chromatic number is $2$ independent of the number of
qudits.

This observation enables us to construct more efficient decoupling
schemes since there are no constraints on the control sequences
between qudits with the same color. Thus, it suffices to construct
decoupling scheme of only a fully coupled $\chi$-qudit (and not a
fully coupled $n$-qudit) system, and apply identical control
operations to qudits of the same color. To summarize, the chromatic
number becomes the effective number of qudits when selecting an
orthogonal array. This insight has already been used to obtain
improved decoupling schemes. For instance, the early
work~\cite{jones1999efficient} and the recent
work~\cite{brown2024efficient} applied it to NMR quantum computing and
superconducting qubit devices, respectively.

\subsection{Example of orthogonal array performance}

Applying a decoupling scheme to a time-evolved Hamiltonian operator
$U(\delta t) = e^{-iH\delta t}$ on an arbitrary initial state
$\ket{\psi}$ should ideally leave the state untouched. As an
illustration of the practical performance of orthogonal arrays, we
consider an 8-qubit scenario where $\ket{\psi}$ is sampled from the
Haar distribution and the orthogonal array is chosen as a restriction
of $OA(32,9,4,2)$~(see \cite{Sloane_OA_Library}).  In order to
decouple $H$ we use a first-order Trotter scheme on the 32 components
$P_i H P_i$, where $P_i$ are the Pauli terms in the orthogonal array,
and the operator is implemented as $P_i U(\delta t) P_i^{\dag}$. A
second-order Trotter-Suzuki scheme can be defined similarly, resulting
in a scheme consisting of 64 terms (when not merging of the central
two term in the scheme).  We consider blocks of 64 terms as a single
unit, which amounts to two Trotter steps in the first-order scheme,
and a single step in the second-order scheme. The order of the Pauli
terms in the orthogonal array in each block can either be fixed or
randomly permuted, resulting in deterministic and randomized schemes,
respectively. When considering the average performance over these
schemes, we apply a single instance of the scheme to $\ket{\phi}$,
resulting in a final state $\ket{\phi^{(i)}}$. Repeating this over $r$
scheme instances, we characterize the performance by the trace
distance
\[
\frac{1}{2}
\mbox{Tr}\left|\ket{\phi}\bra{\phi} - \frac{1}{r}\sum_{i=1}^r \ket{\phi^{(i)}}\bra{\phi^{(i)}}\right|.
\]
For the randomized schemes we take $r=100$. For the deterministic
scheme all instances are identical, which allows us to take $r=1$.
The resulting trace distance is shown in
Figure~\ref{Fig:OA_performance} for different numbers of blocks,
corresponding to different numbers of Trotter steps. Each
light-colored line gives the trace distance for a randomly sampled,
but fixed initial state $\ket{\phi}$. The solid line is then taken as
the average trace distance over ten random initial states.  The
results show a clear advantage of the second-order scheme over the
first-order scheme. Further improvements are obtained in the
randomized scheme where a random permutation of the orthogonal array
terms is done for each Trotter step. For comparison, we also include
the performance of a qDRIFT-style approach to decoupling based on
sampling elements from the entire Pauli group. The trace distance for
this approach seems to have a slower start compared to that of the
orthogonal-array based approaches. One possible reason for this is
that the Pauli group ($4^8$ elements in this case) is much larger than
the number of elements in the orthogonal array, which means that it is
not possible to apply each element in the given number of blocks. As a
comparison, we therefore also applied the qDRIFT-style approach on the
much smaller set of Paulis corresponding to the elements in the
orthogonal array. The resulting lines in the plots exactly overlap
with those of the full Pauli group. We therefore suspect that the
difference between qDRIFT and the orthogonal-array based approach is
because the former does not sample the terms equally often, thereby
preventing exact cancellation of the terms. As the number of blocks
increases the relative imbalance of the terms in the qDRIFT-style
approach eventually becomes less pronounced, leading to a convergence
speed that seems to match that of the first-order Trotterization with
orthogonal arrays.

\begin{figure}
\centering
\begin{tabular}{cc}
\includegraphics[width=0.47\textwidth]{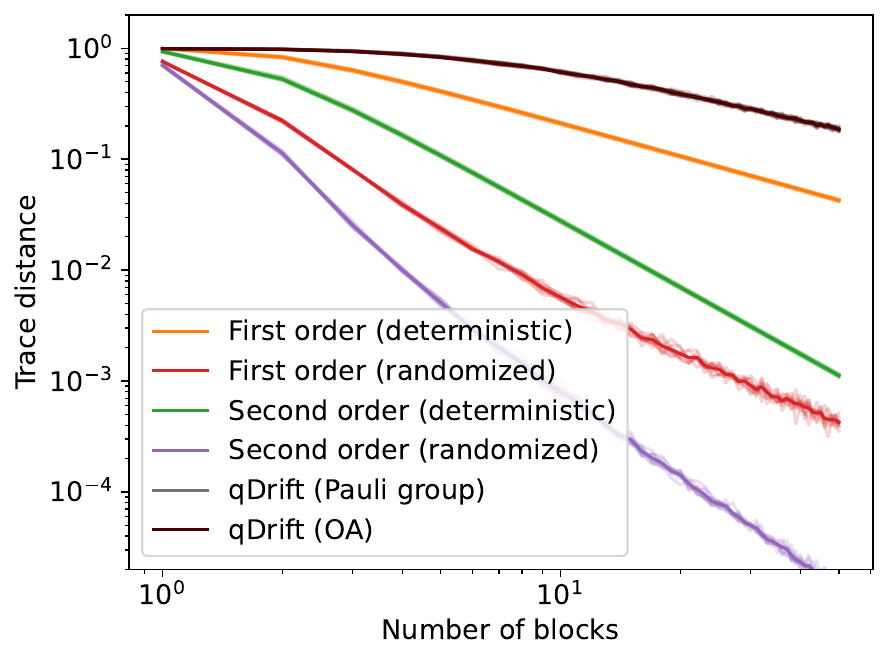}&
\includegraphics[width=0.47\textwidth]{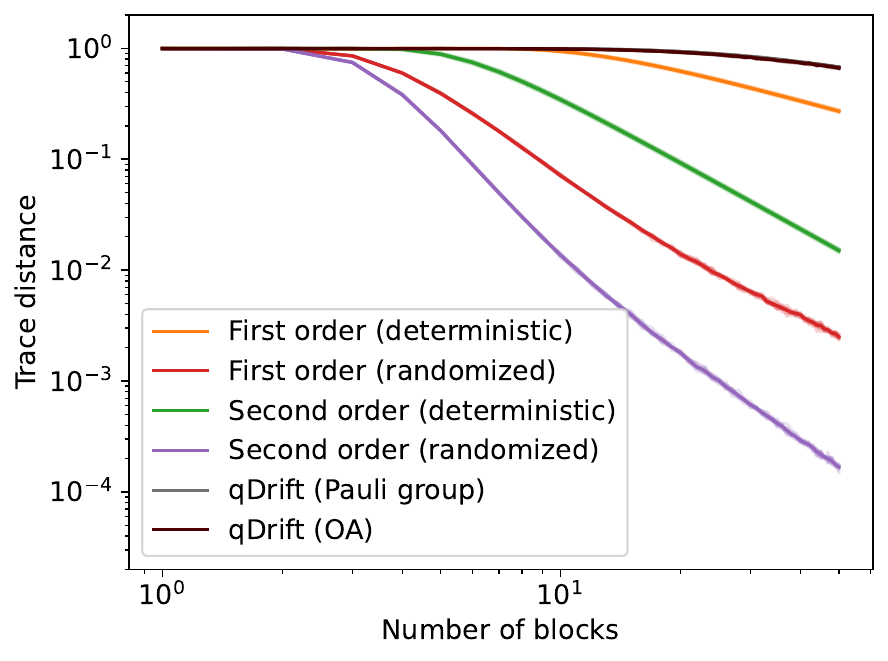}\\
({\bf{a}}) & ({\bf{b}})
\end{tabular}
\caption{Application of various decoupling schemes on randomly sampled
  initial states $\ket{\phi}$. The faint lines give the trace distance
  between the initial state, and the state obtained following
  application of the decoupled Hamiltonian time evolution operator
  $U$. The solid lines give the average of this distance over ten
  random initial states. Each block consists of 64 application of $U$
  flanked with relevant Pauli terms. A larger number of blocks
  corresponds to smaller time steps in the Trotterized
  implementation. We consider ({\bf{a}}) a sparse 8-qubit Hamiltonian
  with 40 random two-local Pauli terms, and ({\bf{b}}) an 8-qubit
  Hamiltonian with all one- and two-local Pauli terms, with
  coefficients chosen i.i.d.~uniformly at random over the interval
  $[0,2\pi)$.}\label{Fig:OA_performance}
\end{figure}

\section{Improved controlization}

We now discuss how the OA-based constructions can give improved
controlization schemes. We first bound the number of controlled-gates
needed to implement a controlized time-evolution assuming the
existence of an orthogonal array with given parameters. Then we use
the bounds from Theorems~\ref{thm:3.20} and \ref{thm:6.40} to
determine the number of controlled operations needed to controlize
$2$-local Hamiltonians.

\begin{theorem}[OAs and first-order Trotter approximation]\label{thm:OA1storder}
  Let $H$ be an arbitrary $k$-local Hamiltonian acting on $n$
  qudits. Assume there exists an $OA(N,n, d^2, k)$. We can realize a
  unitary $\tilde{U}$ such that
    \begin{align}
        \| \tilde{U} - \mathrm{ctrl}(e^{-i H t}) \| \le \varepsilon
    \end{align}
    using 
    \begin{align}
        O\left(\frac{N \, t^2 \, \|H\|^2}{\varepsilon}\right)
    \end{align}
    many controlled generalized Paulis.
\end{theorem}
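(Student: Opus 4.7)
The strategy is to combine three ingredients already established in the paper: (i) the OA-to-decoupling correspondence, (ii) Theorem~\ref{thm:decoupling_ctrl} converting decoupling schemes into controlization schemes, and (iii) a standard first-order Trotter error bound applied to one full control cycle iterated many times.

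\textbf{Step 1: Build the controlization scheme.} The given $OA(N,n,d^2,k)$ furnishes a decoupling scheme $\cD=(\tau_1,U_1;\ldots;\tau_N,U_N)$ with $\tau_j = 1/N$ and each $U_j$ an $n$-fold tensor product of generalized Pauli matrices, so that $\cD(H)=0_\cH$. Applying Theorem~\ref{thm:decoupling_ctrl}, the controlization scheme $\cC=\Lambda(\cD)=(\tau_j,\Lambda(U_j))_{j=1}^N$ simulates the target Hamiltonian $\tilde{H}=|1\>\<1|\otimes H$ starting from $I_2\otimes H$. Each control operation $\Lambda(U_j)$ is a product of at most $n$ controlled generalized Paulis, and up to an $n$-independent constant we count them as $N$ per cycle.

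\textbf{Step 2: One cycle is a first-order Trotter step for $\tilde{H}$.} Running the scheme once with the times $\tau_j$ rescaled by a factor $s>0$ produces the unitary
\begin{align}
    V(s) \;=\; \prod_{j=1}^{N} \Lambda(U_j)\, e^{-i(I_2\otimes H)\, s\tau_j} \,\Lambda(U_j)^\dagger
    \;=\; \prod_{j=1}^{N} e^{-i\,\Lambda(U_j)(I_2\otimes H)\Lambda(U_j)^\dagger\, s/N},
\end{align}
which is exactly a first-order Trotter product for the Hamiltonian $\tilde{H}=\sum_j\Lambda(U_j)(I_2\otimes H)\Lambda(U_j)^\dagger / N = |1\>\<1|\otimes H$ evolved for time $s$. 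Note $\|\tilde{H}\|\le \|H\|$ and each summand has norm at most $\|H\|/N$.

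\textbf{Step 3: Iterate with $m$ Trotter steps.} Set $s=t/m$ and define $\tilde{U}=V(t/m)^m$. The standard first-order Trotter bound gives, per cycle,
\begin{align}
    \bigl\|V(t/m) - e^{-i\tilde{H}t/m}\bigr\|
    \;\le\; \tfrac{1}{2}\sum_{1\le j<k\le N}\bigl\|[\,\Lambda(U_j)(I_2\otimes H)\Lambda(U_j)^\dagger,\,\Lambda(U_k)(I_2\otimes H)\Lambda(U_k)^\dagger\,]\bigr\|\,(t/(mN))^2,
\end{align}
which is at most $\binom{N}{2}\cdot 2\|H\|^2\cdot (t/(mN))^2 \le t^2\|H\|^2/m^2$. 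Summing over $m$ cycles and using the triangle inequality gives
\begin{align}
    \bigl\|\tilde{U} - e^{-i\tilde{H}t}\bigr\| \;\le\; \frac{t^2\,\|H\|^2}{m}.
\end{align}
Since $e^{-i\tilde{H}t}=|0\>\<0|\otimes I_\cH + |1\>\<1|\otimes e^{-iHt}=\mathrm{ctrl}(e^{-iHt})$, choosing $m=\lceil t^2\|H\|^2/\varepsilon\rceil$ yields the claimed accuracy. The total count of controlled generalized Pauli operations is $O(mN)=O(N\,t^2\,\|H\|^2/\varepsilon)$.

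\textbf{Expected main obstacle.} There is no conceptual obstacle: each ingredient has been set up in the paper, and the only calculation is the textbook first-order Trotter bound. The only mild care is in bookkeeping (i) that $\Lambda$ is a group homomorphism so that conjugation by $\Lambda(U_j)$ inside the exponential matches conjugation of $I_2\otimes H$, and (ii) that the commutator sum is bounded using $\|H\|$ rather than $\|\tilde{H}\|$ to absorb the $N$-dependence coming from the $\binom{N}{2}$ pairs, which cleanly cancels against the $1/N^2$ factor from the rescaled times.
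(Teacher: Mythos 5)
Your proof is correct and follows essentially the same route as the paper: first-order Trotterization of the $N$-term average-Hamiltonian decomposition $\tilde{H}=\frac{1}{N}\sum_j \Lambda(U_j)(I_2\otimes H)\Lambda(U_j)^\dagger$, a standard error bound whose $N$-dependence cancels against the $1/N$ weights, and solving for the number of repetitions $r=\Theta(t^2\|H\|^2/\varepsilon)$. The only cosmetic difference is that you invoke the explicit commutator form of the Trotter bound while the paper cites the $\bigl(\sum_j\|H_j\|\bigr)^2$ bound from Lemma 6 of Childs et al.; both yield the same scaling.
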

\begin{proof}
    We rely on the following upper bound on the Trotter error.
Let $H=\sum_{j=1}^N H_j$ and
\begin{align}
    U_\mathrm{Trotter} 
    &=
    \left(
        e^{-i H_N t/r} \cdots e^{-i H_1 t/r}
    \right)^r.
\end{align}
Standard error bounds for product formula give that 
\begin{align}
    \| e^{-i H t} - U_\mathrm{Trotter} \| 
    &=
    O\left(\frac{t^2}{r} \left(\sum_{k=1}^N \| H_j \|\right)^2 \right),
\end{align}
where $\|\cdot\|$ is the spectral norm (see e.g., Lemma 6 in
\cite{childs2021theory}).

For decoupling/controlization, we have
\begin{align}
    H_j = \frac{1}{N} U_j H U_j^\dagger
\end{align}
for some unitaries $U_1,\ldots,U_N$ and $H=0$. Thus we obtain the upper bound 
\begin{align}
    \| I - U_\mathrm{Trotter} \| 
    &=
   O\left( \frac{t^2}{r} \, N^2 \, \frac{1}{N^2} \, \| H \|^2\right)
    \le O\left(\frac{t^2 \, \| H \|^2}{r}\right).
\end{align}
If we want to make the Trotter error less than $\varepsilon$, then we
have to choose
\begin{align}
    r = \Theta\left(\frac{t^2 \, \|H\|^2}{ \varepsilon}\right).
\end{align}
The total number of control operations required is $r N$; $N$ control
operations in each of the $r$ Trotter subintervals.
\end{proof}

\begin{theorem}[OAs and second-order Trotter approximation]\label{thm:OA2ndorder}
  Let $H$ be an arbitrary $k$-local Hamiltonian acting on $n$
  qudits. Assume there exists an $OA(N,n, d^2, k)$. We can realize a
  unitary $\tilde{U}$ such that
    \begin{align}
        \| \tilde{U} - \mathrm{ctrl}(e^{-i H t}) \| \le \varepsilon
    \end{align}
    using
    \begin{align}
        O\left(\frac{Nt^{3/2} \, \| H \|^{3/2}}{\varepsilon^{1/2}}\right).
    \end{align}
    many controlled generalized Paulis.
\end{theorem}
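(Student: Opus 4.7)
The plan is to follow the same template as the proof of Theorem~\ref{thm:OA1storder}, replacing the first-order Trotter bound with the standard second-order Suzuki bound. By Theorem~\ref{thm:decoupling_ctrl}, the OA-based decoupling scheme $\cD=(1/N,U_1;\ldots;1/N,U_N)$ lifts to the controlization scheme $\cC=(1/N,\Lambda(U_1);\ldots;1/N,\Lambda(U_N))$, and implementing one ``cycle'' of $\cC$ on the joint system $\C^2\otimes\cH$ with system Hamiltonian $I_2\otimes H$ produces (in the ideal limit) evolution under $|1\>\<1|\otimes H$, which is exactly the desired controlized Hamiltonian.

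The approximation is obtained from a second-order Suzuki product formula applied to the $N$ terms
\begin{align*}
    H_j \;=\; \tfrac{1}{N}\,\Lambda(U_j)(I_2\otimes H)\Lambda(U_j)^\dagger,\qquad j=1,\ldots,N,
\end{align*}
each of which satisfies $\|H_j\|=\|H\|/N$, so that $\sum_{j=1}^N\|H_j\|=\|H\|$. Writing
\begin{align*}
    U_{\mathrm{Trotter}_2} \;=\; \Bigl(\prod_{j=1}^N e^{-iH_j t/(2r)}\,\prod_{j=N}^1 e^{-iH_j t/(2r)}\Bigr)^{r},
\end{align*}
the standard bound for the second-order Suzuki formula (see e.g.\ the order-$p$ version of Lemma~6 in \cite{childs2021theory} with $p=2$) gives
\begin{align*}
    \bigl\|e^{-i(\sum_j H_j)t} - U_{\mathrm{Trotter}_2}\bigr\| \;=\; O\!\left(\frac{t^3}{r^2}\Bigl(\sum_{j=1}^N\|H_j\|\Bigr)^3\right) \;=\; O\!\left(\frac{t^3\,\|H\|^3}{r^2}\right).
\end{align*}
Since $\sum_j H_j=|1\>\<1|\otimes H$ up to the decoupling identity from Theorem~\ref{thm:decoupling_ctrl}, evolution under $\sum_j H_j$ for time $t$ is precisely $\mathrm{ctrl}(e^{-iHt})$ (the trivial $|0\>\<0|$ block carries no dynamics).

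Forcing the right-hand side below $\varepsilon$ requires $r=\Theta\!\bigl(t^{3/2}\|H\|^{3/2}/\varepsilon^{1/2}\bigr)$. Each of the $r$ outer Trotter steps uses $2N$ exponentials $e^{-iH_j t/(2r)}$, and each such exponential is implemented by conjugating the black-box evolution $e^{-i(I_2\otimes H)t/(2rN)}$ by two controlled generalized Paulis $\Lambda(U_j),\Lambda(U_j)^\dagger$. Hence the total number of controlled generalized Paulis is
\begin{align*}
    O(rN) \;=\; O\!\left(\frac{N\,t^{3/2}\,\|H\|^{3/2}}{\varepsilon^{1/2}}\right),
\end{align*}
as claimed. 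The only nontrivial ingredient is invoking the correct second-order Suzuki error bound in terms of $\sum_j\|H_j\|$; once this is in hand the rest is the same bookkeeping as in Theorem~\ref{thm:OA1storder}, with the $N$-dependence again canceling because each $\|H_j\|$ scales as $1/N$.
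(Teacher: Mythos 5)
Your proposal is correct and follows essentially the same route as the paper's proof: the second-order Suzuki formula applied to the $N$ conjugated terms, the bound from Lemma~6 of \cite{childs2021theory} with $\sum_j\|H_j\|=\|H\|$ because each term carries a $1/N$ factor, the resulting choice $r=\Theta(t^{3/2}\|H\|^{3/2}/\varepsilon^{1/2})$, and the count of $O(rN)$ controlled generalized Paulis. Your version is in fact slightly more explicit than the paper's in carrying out the argument on the joint system $\C^2\otimes\cH$ with the terms $\Lambda(U_j)(I_2\otimes H)\Lambda(U_j)^\dagger$ summing to $|1\>\<1|\otimes H$, but this is the same argument the paper relies on implicitly via Theorem~\ref{thm:decoupling_ctrl}.
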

\begin{proof}
  Suppose $H=\sum_{j=1}^N H_j$ and consider the second-order Trotter
  approximation to $e^{-iHt}$
    \begin{align}
        U_\mathrm{Trotter} 
        &=
        \left(
            e^{-i H_1 \tfrac{t}{2r}} \cdots e^{-i H_N \tfrac{t}{2r}}
        \cdot
            e^{-i H_N \tfrac{t}{2r}} \cdots e^{-i H_1 \tfrac{t}{2r}}
        \right)^r.
    \end{align}
    Then, from Lemma 6 in \cite{childs2021theory}, we have
    \begin{align}
        \| e^{-i H t} - U_\mathrm{Trotter} \| 
        &= r \times O \left(\left(\frac{t}{r} \sum_{j=1}^N\|H_j\|\right)^3 \right) = O \left(\frac{t^3}{r^2}\left( \sum_{j=1}^N\|H_j\|\right)^3 \right).
    \end{align}
    For decoupling and controlization, the $H_j$'s
    are of the form $\frac{1}{N}U_jHU_j^\dagger$ for some $H$ and unitaries $U_1,\dots,U_N$, and we obtain
    \begin{align}
        \| e^{-i H t} - U_\mathrm{Trotter} \| = O \left(\frac{t^3}{r^2}\|H\|^3 \right).
    \end{align}
    To make the second-order Trotter error less than $\varepsilon$, we have to choose (times some constant)
    \begin{align}
        r = \Theta\left(\frac{t^{3/2} \, \| H \|^{3/2}}{\varepsilon^{1/2}}\right).
    \end{align}
    The total number of controlled operations is then $2 r N$.
\end{proof}

The bounds in Theorems \ref{thm:OA1storder} and \ref{thm:OA2ndorder}
are stated in terms of the parameter $N$ of the orthogonal array
$OA(N,n, d^2, k)$. For 2-local Hamiltonians on $n$ qudits, we now
appeal to Theorem \ref{thm:6.40} and determine how $N$ scales with
$n$. Setting $s=d^2$ in Theorem \ref{thm:6.40}, we need to choose
$\ell$ such that
\begin{align}
    2\tfrac{d^{2 \ell} - 1}{d^2 - 1} \ge n 
\end{align}
in order to decouple or controlize $n$-qudit Hamiltonians. It suffices to choose $\ell$ satisfying
\begin{align}
    d^{2 \ell} \ge n d^2-1
\end{align}
and thus the total number of controlled operations is
\begin{align}
    N &= 2d^{2\ell} = O(n).
\end{align}
Using this with Theorems \ref{thm:OA1storder} and \ref{thm:OA2ndorder}
we arrive at the following statement about controlizing $2$-local
Hamiltonians.
\begin{theorem}[Controlizing $2$-local Hamiltonians]
  Let $H$ be an arbitrary $2$-local Hamiltonian acting on $n$
  qudits. We can realize a unitary $\tilde{U}$ such that
    \begin{align}
        \| \tilde{U} - \mathrm{ctrl}(e^{-i H t}) \| \le \varepsilon
    \end{align}
    using 
    \begin{align}
        O\left(\frac{n \, t^2 \, \|H\|^2}{\varepsilon}\right)
    \end{align}
    controlized generalized Paulis with a first-order Trotter
    formula. Moreover, the number of controlled generalized Paulis can
    be improved to
    \begin{align}
        O\left(\frac{nt^{3/2} \, \| H \|^{3/2}}{\varepsilon^{1/2}}\right)
    \end{align}
    by applying a second-order Trotter formula.
\end{theorem}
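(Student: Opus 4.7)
The plan is to combine Theorem~\ref{thm:6.40}, which yields strength-two orthogonal arrays whose row count grows linearly with the number of columns, with the controlization cost bounds of Theorems~\ref{thm:OA1storder} and~\ref{thm:OA2ndorder}. Those earlier results have already done most of the work: they reduce the construction of an $\varepsilon$-close approximation of $\mathrm{ctrl}(e^{-iHt})$ for a $k$-local Hamiltonian to the problem of exhibiting an $OA(N,n,d^2,k)$ and substituting its row count $N$ into the Trotter-based bounds. Hence the only remaining tasks are (i) to pick a strength-two orthogonal array covering all $n$ qudits, and (ii) to carry out the substitution.

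For step (i), I would apply Theorem~\ref{thm:6.40} with $s = d^2$ to produce an $OA\bigl(2d^{2\ell},\, 2(d^{2\ell}-1)/(d^2-1)-1,\, d^2,\, 2\bigr)$ for any $\ell \ge 2$. Choosing the smallest $\ell$ with $2(d^{2\ell}-1)/(d^2-1)-1 \ge n$ — a condition implied by the simpler inequality $d^{2\ell} \ge nd^2-1$ — supplies at least $n$ columns, and any excess columns may be deleted since restricting an orthogonal array to a subset of its columns preserves its strength. With this choice of $\ell$, the row count is
\begin{align}
N = 2d^{2\ell} = O(n),
\end{align}
as noted in the linear-scaling discussion immediately preceding the theorem statement.

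For step (ii), I would substitute $N = O(n)$ directly into the two Trotter counts. Theorem~\ref{thm:OA1storder} then yields $O(Nt^2\|H\|^2/\varepsilon) = O(nt^2\|H\|^2/\varepsilon)$ controlled generalized Paulis in the first-order case, and Theorem~\ref{thm:OA2ndorder} yields $O(Nt^{3/2}\|H\|^{3/2}/\varepsilon^{1/2}) = O(nt^{3/2}\|H\|^{3/2}/\varepsilon^{1/2})$ in the second-order case. There is no real technical obstacle; the argument is essentially a substitution combined with the existence result of Theorem~\ref{thm:6.40}. The only housekeeping points are the assumption that $d$ (equivalently $s = d^2$) is a prime power, which is inherited from the hypothesis of Theorem~\ref{thm:6.40}, and the elementary observation that deleting columns of an orthogonal array preserves its strength, which follows immediately from the definition.
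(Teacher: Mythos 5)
Your proposal is correct and follows essentially the same route as the paper: the paper likewise invokes Theorem~\ref{thm:6.40} with $s=d^2$, picks the smallest $\ell$ with $d^{2\ell}\ge nd^2-1$ so that $N=2d^{2\ell}=O(n)$, and substitutes this $N$ into Theorems~\ref{thm:OA1storder} and~\ref{thm:OA2ndorder}. Your added housekeeping remarks (deleting surplus columns preserves strength; $d$ must be a prime power) are consistent with what the paper states elsewhere.
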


A key distinguishing feature of the controlization schemes we propose
here is that they can be implemented deterministically, unlike those
in Ref.~\cite{odake2024universal}. However, we require that the
Hamiltonian be $k$-local, whereas the controlization scheme in
Ref.~\cite{odake2024universal} works for any $n$-qubit Hamiltonian and
is therefore more general.

The generality of Ref.~\cite{odake2024universal} is because they use a
decoupling protocol that averages over conjugations with all tensor
products of Pauli on $n$-qubits. But, as a result, the target
Hamiltonian $\ket{1}\bra{1}\otimes H$ generating the
controlled-evolution ends up being a sum of $O(4^n)$ terms. This means
that the controlization scheme cannot be implemented efficiently by
usual Trotter-Suzuki product formula, as we would need to implement
$4^n$ terms for each Trotter step. Ref.~\cite{odake2024universal}
overcomes this by using the qDRIFT protocol by
Campbell~\cite{campbell2019random}. The qDRIFT algorithm samples terms
from the Hamiltonian with probabilities according to their weights in
the decomposition and implements a randomized product-formula which
converges to the target evolution. The overall cost scales
polynomially with the sum of the weights of the individual terms
instead of the total number of terms. Since the total weight is $O(1)$
in the controlization scheme of~\cite{odake2024universal}, qDRIFT
enables an efficient implementation.

In contrast, the OA-based controlization schemes here can give a
target Hamiltonian with fewer terms we have access to more structural
information about $H$. For an unknown $n$-qubit Hamiltonian $H$ that
is $k$-local, one has to find an $\mathrm{OA}_\lambda(N, n, 4, k)$
with $N=\lambda 4^k$ such that the multiplicity $\lambda$ is as small
as possible. As discussed in Theorem~\ref{thm:bch_oa}, there exist
$OA(N, n, 4, k)$ such that $N=O(n^{k-1})$. In addition to relying on
such asymptotic results, one should always search the literature or
use a software package (e.g.,~\cite{eendebak2019}) to identify an OA
with the smallest possible $N$ for the concrete application at hand.
The OA-based decoupling scheme then gives a decomposition of the
effective Hamiltonian into $N$ terms. With this, a simple first-order
Trotter formula can implement controlization using $N$ operations to
intersperse the unknown Hamiltonian time evolution. Finally, let us
remark that $n$-qubit $2$-local Hamiltonians with an interaction graph
of degree at most $\Delta$, can always be decoupled and controlized
using an orthogonal array with $N=O(\Delta)$.

\section{Some future research directions}

Recall the correspondence between the $U_j$ and $V_j$ matrices for
decoupling schemes.
\begin{align}
    V_1     &= I \\
    V_j     &= U_j U_{j-1}^\dagger \mbox{ for } j=2,\ldots,N \\
    V_{N+1} &= U_N^\dagger
\end{align}
The control operations $V_j$ that need to be implemented for the
decoupling scheme are of the form
$P_{1j} \otimes P_{2j} \otimes \ldots \otimes P_{nj}\in\cG^{\otimes
  n}$, where $\cG$ is a finite subgroup of $\cU(d)$.  Let us assume
that for each time step $j$ the local operations can be implemented in
parallel. When we convert the decoupling scheme into a controlization
scheme, however, we have to implement the control operations
\begin{align}
    \Lambda(V_j) 
    &= 
    |0\>\<0| \otimes P_{1j} \otimes P_{2j} \otimes \ldots \otimes P_{nj} +
    |1\>\<1| \otimes I \otimes I \otimes \ldots \otimes I.
\end{align}
Now, it becomes clear that the weight of $V_j$ (that is, the number of
non-identity components) impacts resources needed to realize
$\Lambda(V_j)$. Thus, one possible question is how to construct
decoupling schemes such that the control operations $V_j$ have small
weights. Here it may be helpful to consider orthogonal arrays
constructing from error correcting codes as in
\cite{bookatz2016improved}.

Another question related to this is how to realize the control gates
$\Lambda(P_{kj})$ acting on the control qubit and the $k$th qudit
efficiently since long-range gates may not always be directly
available. Methods such as those in \cite{baeumer2023efficient} may
help realize these gates.

\subsection*{Acknowledgements}

We would like to thank Patrick Rall for helpful discussions. 

\bibliographystyle{unsrt}
\bibliography{refs}

\end{document}